

\documentclass[aps,pra,groupedaddress,showpacs,twocolumn,superscriptaddress]{revtex4}

\usepackage{amsmath}
\usepackage{amsthm}

\newcommand{\ket}[1]{|#1\rangle}             
\newcommand{\bra}[1]{\langle #1|}            
\newcommand{\dyad}[2]{\ket{#1}\bra{#2}}      

\newcommand{\HC}{\mathcal{H}}
\newcommand{\SC}{\mathcal{S}}
\newcommand{\TC}{\mathcal{T}}

\newtheorem{theorem}{Theorem}
\newtheorem{corollary}{Corollary}

\begin{document}

\title{Entanglement transformations using separable operations}

\author{Vlad Gheorghiu}
\email[Electronic address:]{vgheorgh@andrew.cmu.edu}

\author{Robert B. Griffiths}
\affiliation{Department of Physics, Carnegie Mellon University, Pittsburgh,
Pennsylvania 15213, USA}

\date{Version of 18 September 2007}

\begin{abstract}
We study conditions for the deterministic transformation
$\ket{\psi}\longrightarrow\ket{\phi}$ of a bipartite entangled state by a
separable operation. If the separable operation is a local operation with classical communication (LOCC), Nielsen's
majorization theorem provides necessary and sufficient conditions. For the
general case we derive a necessary condition in terms of products of Schmidt
coefficients, which is equivalent to the Nielsen condition when either of
the two factor spaces is of dimension 2, but is otherwise weaker.  One
implication is that no separable operation can reverse a deterministic map
produced by another separable operation, if one excludes the case where
the Schmidt coefficients of $\ket{\psi}$ and are the same as those of
$\ket{\phi}$.  The question of sufficient conditions in the general
separable case remains open. When the Schmidt coefficients of $\ket{\psi}$
are the same as those of $\ket{\phi}$, we show that the Kraus operators of the
separable transformation restricted to the supports of $\ket{\psi}$ on the factor
spaces are proportional to unitaries.  When that proportionality holds and
the factor spaces have equal dimension, we find conditions for the
deterministic transformation of a collection of several full Schmidt rank
pure states $\ket{\psi_j}$ to pure states $\ket{\phi_j}$.
\end{abstract}

\pacs{03.67.Mn}
\maketitle

\section{Introduction}
\label{sct1}

A separable operation $\Lambda$ on a bipartite quantum system is a
transformation of the form
\begin{equation}
\label{eqn1} \rho' = \Lambda(\rho)=\sum_m
\bigl(A_m^{}\otimes B_m^{}\bigr)\rho (A_m^\dagger \otimes B_m^\dagger\bigr),
\end{equation}
where $\rho$ is an initial density operator on the Hilbert space
$\HC_A\otimes\HC_B$.  The Kraus operators $A_m \otimes B_m$ are
arbitrary product operators satisfying the closure condition
\begin{equation}
\label{eqn2}
\sum_m A_m^\dagger A_m^{}\otimes B_m^\dagger B_m^{}=I\otimes I.
\end{equation}
The extension of \eqref{eqn1} and \eqref{eqn2} to multipartite systems is
obvious, but here we will only consider the bipartite case.  To avoid
technical issues the sums in \eqref{eqn1} and \eqref{eqn2} and the dimensions
of $\HC_A$ and $\HC_B$ are assumed to be finite.

Various kinds of separable operations play important roles in quantum
information theory. When $m$ takes on only one value the operators $A_1$ and
$B_1$ are (or can be chosen to be) unitary operators, and the operation is a
\emph{local unitary} transformation. When every $A_m$ and every $B_m$ is
proportional to a unitary operator, we call the operation a \emph{separable
  random unitary channel}. Both of these are members of the well-studied class
of \emph{local operations with classical communication} (LOCC), which can be
thought of as an operation carried out by Alice on $\HC_A$ with the outcome
communicated to Bob.  He then uses this information to choose an operation
that is carried out on $\HC_B$, with outcome communicated to Alice, who uses it
to determine the next operation on $\HC_A$, and so forth.  For a precise
definition and a discussion, see [\cite{HorodeckiQuantumEntanglement}, Sec.~XI]. 
While any LOCC is a separable operation,
i.e., can be written in the form \eqref{eqn1}, the reverse is not true: there
are separable operations which fall outside the LOCC class
\cite{PhysRevA.59.1070}.

Studying properties of general separable operations seems worthwhile because
any results obtained this way then apply to the LOCC subcategory, which is
harder to characterize from a mathematical point of view.  However, relatively
little is known about separable operations, whereas LOCC has been the subject
of intensive studies, with many important results.  For example, an LOCC
applied to a pure entangled state $\ket{\psi}$ (i.e., $\rho=\dyad{\psi}{\psi}$
in \eqref{eqn1}) results in an ensemble of pure states (labeled by $m$) whose
average entanglement cannot exceed that of $\ket{\psi}$, [\cite{HorodeckiQuantumEntanglement}, Sec.~XV D].  One suspects that the same is true of a
general separable operation $\Lambda$, but this has not been proved.  All that
seems to be known is that $\Lambda$ cannot ``generate'' entanglement when
applied to a product pure state or a separable mixed state: the outcome (as is
easily checked) will be a separable state.

If an LOCC is applied to a pure (entangled) state $\ket{\psi}$, Lo and Popescu
\cite{PhysRevA.63.022301} have shown that the same result, typically an
ensemble, can be achieved using a different LOCC (depending both on the
original operation and on $\ket{\psi}$) in which Alice carries out an
appropriate operation on $\HC_A$ and Bob a unitary, depending on that outcome,
on $\HC_B$.  This in turn is the basis of a condition due to Nielsen
\cite{Nielsen:Majorization} which states that there is an LOCC operation
deterministically (probability 1) mapping a given bipartite state
$\ket{\psi}$ to another pure state $\ket{\phi}$ if and only if $\ket{\phi}$
majorizes $\ket{\psi}$ \cite{ntk1}.

In this paper we derive a necessary condition for a separable operation to
deterministically map $\ket{\psi}$ to $\ket{\phi}$ in terms of their Schmidt
coefficients, the inequality \eqref{eqn5}.  While it is weaker than Nielsen's
condition (unless either $\HC_A$ or $\HC_B$ is two dimensional, in which case
it is equivalent), it is not trivial. In the particular case that the Schmidt
coefficients are the same, i.e., $\ket{\psi}$ and $\ket{\phi}$ are equivalent
under local unitaries, we show that all the $A_m$ and $B_m$ operators in
\eqref{eqn1} are proportional to unitaries, so that in this case the separable
operation is also a random unitary channel.  For this situation we
also study the conditions under which a whole \emph{collection}
$\{\ket{\psi_j}\}$ of pure states are deterministically mapped to pure states,
a problem which seems not to have been previously studied either for LOCC or
for more general separable operations.

The remainder of the paper is organized as follows.  Section~\ref{sct2} has
the proof, based on a inequality by Minkowski, p.~482 of
\cite{HornJohnson:MatrixAnalysis}, of the relationship between the Schmidt
coefficients of $\ket{\psi}$ and $\ket{\phi}$ when a separable operation
deterministically maps $\ket{\psi}$ to $\ket{\phi}$, and some consequences of
this result.  In Section~\ref{sct3} we derive and discuss the conditions under
which a separable random unitary channel will map a collection of pure
states to pure states.  A summary and some discussion of open questions will
be found in Section~\ref{sct4}.

\section{Local transformations of bipartite entangled states}
\label{sct2}

We use the term \emph{Schmidt coefficients} for the \emph{nonnegative}
coefficients $\{\lambda_j\}$ in the Schmidt expansion
\begin{equation}
\label{eqn3}
\ket{\psi}=\sum_{j=0}^{d-1}\lambda_j\ket{\bar a_j}\otimes\ket{\bar b_j},
\end{equation}
of a state $\ket{\psi}\in \HC_A\otimes\HC_B$, using appropriately chosen
orthonormal bases $\{\ket{\bar a_j}\}$ and $\{\ket{\bar b_j}\}$, with the
order chosen so that
\begin{equation}
\label{eqn4}
 \lambda_0 \geq \lambda_1 \geq \cdots \geq \lambda_{d-1} \geq 0.
\end{equation}
The number $r$ of positive (nonzero) Schmidt coefficients is called the
\emph{Schmidt rank}.  We call the subspace of $\HC_A$ spanned by $\ket{\bar
  a_0},\ket{\bar a_1}\ldots \ket{\bar a_{r-1}}$, i.e., the basis kets for which
the Schmidt coefficients are positive, the $\HC_A$ \emph{support} of
$\ket{\psi}$, and that spanned by $\ket{\bar b_0},\ket{\bar b_1}\ldots \ket{\bar b_{r-1}}$ its
$\HC_B$ \emph{support}.

 Our main result is the following:
\begin{theorem}
\label{thm1}
Let $\ket{\psi}$ and $\ket{\phi}$ be two bipartite entangled states on
$\mathcal{H}_A\otimes\mathcal{H}_B$ with positive Schmidt coefficients
$\{\lambda_j\}$ and $\{\mu_j\}$, respectively, in decreasing order, and let
$r$ be the Schmidt rank of $\ket{\psi}$.  If
$\ket{\psi}$ can be transformed to $\ket{\phi}$ by a deterministic separable
operation, then

i) The Schmidt rank of $\ket{\phi}$ is less than or equal to $r$.

ii)
\begin{equation}
\label{eqn5}
\prod_{j=0}^{r-1}\lambda_j\geq\prod_{j=0}^{r-1}\mu_j.
\end{equation}

iii) If \eqref{eqn5} is an equality with both sides positive,
the Schmidt coefficients of $\ket{\psi}$ and $\ket{\phi}$ are identical,
$\lambda_j = \mu_j$, and the operators $A_m$ and $B_m$ restricted to the
$\HC_A$ and $\HC_B$ supports of $\ket{\psi}$, respectively, are proportional
to unitary operators.

iv) The reverse deterministic transformation of $\ket{\phi}$ to
$\ket{\psi}$ by a separable operation is only possible when the Schmidt
coefficients are identical, $\lambda_j = \mu_j$.
\end{theorem}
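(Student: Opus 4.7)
My plan rests on one algebraic observation: since the sum of rank-one positive operators $(A_m\otimes B_m)\ket{\psi}\bra{\psi}(A_m\otimes B_m)^\dagger$ equals the pure state $\dyad{\phi}{\phi}$, each summand must itself be a positive multiple of $\dyad{\phi}{\phi}$, so $(A_m \otimes B_m)\ket{\psi} = c_m\ket{\phi}$ with $\sum_m |c_m|^2 = 1$. Part (i) is then immediate: a local product operator cannot increase Schmidt rank, so from any nonzero $(A_m\otimes B_m)\ket{\psi}=c_m\ket{\phi}$ we read off that the Schmidt rank of $\ket{\phi}$ is at most $r$.

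For (ii), I would fix Schmidt bases and absorb the necessary local unitaries into the $A_m,B_m$ so that $\ket{\psi}$ and $\ket{\phi}$ correspond to the diagonal matrices $D_\lambda,D_\mu$ (padded by zeros beyond their respective Schmidt ranks). The relation $(A_m\otimes B_m)\ket{\psi}=c_m\ket{\phi}$ then reads $A_m D_\lambda B_m^T = c_m D_\mu$. If the Schmidt rank of $\ket{\phi}$ is strictly less than $r$ the right side of \eqref{eqn5} vanishes and nothing remains, so I assume Schmidt rank $r$. Write $\alpha_m,\beta_m$ for the first $r$ columns of $A_m,B_m$ and $\tilde\alpha_m,\tilde\beta_m$ for their $r\times r$ top blocks. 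Sandwiching \eqref{eqn2} between the supports of $\ket{\psi}$ gives $\sum_m (\alpha_m^\dagger\alpha_m)\otimes(\beta_m^\dagger\beta_m)=I_{r^2}$, to which I apply Minkowski's determinant inequality (Horn-Johnson, p.~482):
\begin{equation*}
1=[\det I_{r^2}]^{1/r^2}\;\geq\;\sum_m[\det(\alpha_m^\dagger\alpha_m)]^{1/r}[\det(\beta_m^\dagger\beta_m)]^{1/r}.
\end{equation*}
Cauchy-Binet bounds each factor below by $|\det\tilde\alpha_m|^2$ and $|\det\tilde\beta_m|^2$, while the restriction $\tilde\alpha_m D_\lambda^{(r)}\tilde\beta_m^T=c_m D_\mu^{(r)}$ gives $|\det\tilde\alpha_m||\det\tilde\beta_m|=|c_m|^r(\prod_j\mu_j)/(\prod_j\lambda_j)$. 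Substituting and summing using $\sum_m|c_m|^2=1$ converts the Minkowski bound into $(\prod_j\mu_j/\prod_j\lambda_j)^{2/r}\leq 1$, which is \eqref{eqn5}.

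For (iii), equality in \eqref{eqn5} propagates back through both bounds. Cauchy-Binet equality forces every $r\times r$ minor of $\alpha_m$ and of $\beta_m$ other than the top one to vanish, which (with $\tilde\alpha_m$ invertible) means $A_m$ sends the $\HC_A$ support of $\ket{\psi}$ into that of $\ket{\phi}$, and similarly for $B_m$. Minkowski equality forces each nonzero summand $\tilde\alpha_m^\dagger\tilde\alpha_m\otimes\tilde\beta_m^\dagger\tilde\beta_m$ to be a positive multiple of $I_{r^2}$, so each $\tilde\alpha_m$ and $\tilde\beta_m$ is individually proportional to a unitary. Substituting back into $\tilde\alpha_m D_\lambda^{(r)}\tilde\beta_m^T=c_m D_\mu^{(r)}$ and comparing the (unitary-invariant) singular values of both sides shows $\{\lambda_j\}$ and $\{\mu_j\}$ coincide up to one common positive factor, which the normalizations $\sum\lambda_j^2=\sum\mu_j^2=1$ pin to unity. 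Part (iv) then follows by applying (ii) in both directions to obtain equality in \eqref{eqn5} and invoking (iii).

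The main obstacle I foresee is extracting the stronger conclusion of (iii), namely that $\tilde\alpha_m$ and $\tilde\beta_m$ are \emph{separately} proportional to unitaries rather than only that their tensor products are isotropic. Deriving this requires invoking the precise equality clauses of both Minkowski and Cauchy-Binet in tandem, and carefully handling degenerate $m$ for which $c_m$ or $\tilde\alpha_m$ may vanish. A secondary technicality is verifying that the closure condition \eqref{eqn2} restricts cleanly to the $r$-dimensional supports so that Minkowski can indeed be applied on an identity of size $r^2$.
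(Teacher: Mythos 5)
Your argument is correct and rests on the same key tool as the paper's proof---Minkowski's determinant inequality applied to the closure condition $\sum_m A_m^\dagger A_m\otimes B_m^\dagger B_m=I\otimes I$ sandwiched down to the supports of $\ket{\psi}$---but you evaluate the individual determinants by a genuinely different mechanism. The paper writes $(A_m\otimes B_m)\ket{\psi}=\sqrt{p_m}\ket{\phi}$ via map--state duality as $A_m\psi\bar B_m=\sqrt{p_m}\phi$, solves for $A_m=\sqrt{p_m}\,\phi\bar B_m^{-1}\psi^{-1}$ using invertibility of $\bar B_m$ and $\psi$, substitutes into the closure sum, and reads off $\det(A_m^\dagger A_m)\det(B_m^\dagger B_m)$ directly; the rank-deficient and unequal-dimension cases are then handled by a separate reduction to the projected operators $A'_m=A_mP_A$, $B'_m=B_mP_B$ in \eqref{eqn16}--\eqref{eqn18}. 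You instead keep the rectangular restricted blocks $\alpha_m,\beta_m$ throughout and use Cauchy--Binet to bound $\det(\alpha_m^\dagger\alpha_m)$ from below by the squared minor supported on $\TC_A$, which yields the same per-term quantity $|c_m|^2(\prod_j\mu_j/\prod_j\lambda_j)^{2/r}$. This buys two things: you never need to invert $\bar B_m$ (the paper's inversion tacitly assumes every $p_m>0$), and the general-support case is absorbed into the main computation rather than deferred to a separate reduction. The price is paid in part (iii), where you must run the equality clauses of Minkowski and Cauchy--Binet in tandem; this does go through as you sketch it---vanishing of all off-support $r\times r$ minors of $\alpha_m$, together with invertibility of the on-support block $\tilde\alpha_m$ (guaranteed by $\tilde\alpha_m D_\lambda^{(r)}\tilde\beta_m^T=c_mD_\mu^{(r)}$ with $c_m\neq0$ and $D_\mu^{(r)}$ invertible), forces the off-support rows to vanish, so $\alpha_m$ is proportional to a unitary from $\SC_A$ onto $\TC_A$. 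The degenerate terms with $c_m=0$ that you flag as an obstacle are equally untreated in the paper (its proportionality constants $f_{mn}$ are simply asserted to be positive), so this is not a gap relative to the published argument; your remaining steps, including the singular-value comparison giving $\lambda_j=\mu_j$ and the derivation of (iv) from (i)--(iii), match the paper's.
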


\begin{proof}

For the proof it is convenient to use map-state duality (see
\cite{ZcBn04,Griffiths:AtemporalDiagrams} and [\cite{GeometryQuantumStates}, Chap.~11])
defined in the following way.  Let $\{\ket{b_j}\}$ be an orthonormal basis
of $\HC_B$ that will remain fixed throughout the following discussion.
Any ket $\ket{\chi}\in\HC_A\otimes\HC_B$ can be expanded in this basis in the
form
\begin{equation}
\label{eqn6}
  \ket{\chi} = \sum_j \ket{\alpha_j}\otimes\ket{b_j},
\end{equation}
where the $\{\ket{\alpha_j}\}$ are the (unnormalized) expansion coefficients.
We define the corresponding dual map $\chi:\HC_B\rightarrow \HC_A$ to be
\begin{equation}
\label{eqn7}
 \chi = \sum_j\ket{\alpha_j}\bra{b_j}.
\end{equation}
Obviously any map from $\HC_B$ to $\HC_A$ can be written in the form
\eqref{eqn7}, and can thus be transformed into a ket on $\HC_A\otimes\HC_B$ by
the inverse process: replacing $\bra{b_j}$ with $\ket{b_j}$.  The
transformation depends on the choice of basis $\{\ket{b_j}\}$, but this will
not matter, because our results will in the end be independent of this choice.
Note in particular that the \emph{rank} of the operator $\chi$ is exactly the
same as the \emph{Schmidt rank} of $\ket{\chi}$.

For a separable operation that deterministically maps $\ket{\psi}$ to
$\ket{\phi}$ (or, to be more specific, $\dyad{\psi}{\psi}$ to
$\dyad{\phi}{\phi}$) it must be the case that
\begin{equation}
\label{eqn8}
\bigl(A_m\otimes B_m\bigr)\ket{\psi}=\sqrt{p_m}\ket{\phi},
\end{equation}
for every $m$, as otherwise the result of the separable operation acting on
$\ket{\psi}$ would be a mixed state.  (One could also include a complex phase
factor depending on $m$, but this can be removed by incorporating it in
$A_m$---an operation is not changed if the Kraus operators are multiplied by
phases.)  By using map-state duality we may rewrite \eqref{eqn8} in the form
\begin{equation}
\label{eqn9} A_m\psi \bar B_m = \sqrt{p_m}\phi,
\end{equation}
where by $\bar B_m$ we mean the \emph{transpose} of this operator in the basis
$\{\ket{b_j}\}$---or, to be more precise, the operator whose matrix in this
basis is the transpose of the matrix of $B_m$. From \eqref{eqn9} one sees at
once that since the rank of a product of operators cannot be larger than the
rank of any of the factors, the rank of $\phi$ cannot be greater than that of
$\psi$.  When translated back into Schmidt ranks this proves (i).

For the next part of the proof let us first assume that $\HC_A$ and
$\HC_B$ have the same dimension $d$, and that the Schmidt ranks of both
$\ket{\psi}$ and $\ket{\phi}$ are equal to $d$; we leave until later the
modifications necessary when these conditions are not satisfied.  In light of
the previous discussion of \eqref{eqn9}, we see that $\bar B_m$ has rank
$d$, so is invertible. Therefore one can solve \eqref{eqn9} for $A_m$, and if
the solution is inserted in \eqref{eqn2} the result is

\begin{widetext}
\begin{equation}
\label{eqn10}
I\otimes I= \sum_m p_m\big[\psi^{-1\dagger}
\bar B_m^{-1\dagger}(\phi^\dagger\phi)\bar B_m^{-1}\psi^{-1}\big]
\otimes \big[B_m^\dagger B_m\big]
\end{equation}
\end{widetext}

The Minkowski inequality (\cite{HornJohnson:MatrixAnalysis}, p. 482) for a sum of positive semidefinite operators on a $D$-dimensional space
is
\begin{equation}
\label{eqn11}
{\Bigg[\det\Big(\sum_m
Q_m\Big)\Bigg]}^{1/D}\geq\sum_m{\Big(\det{Q_m}\Big)}^{1/D},
\end{equation}
with equality if and only if all $Q_m$'s are proportional, i.e.
$Q_i=f_{ij}Q_j$, where the $f_{ij}$ are positive constants.  Since
$A_m^\dagger A_m\otimes B_m^\dagger B_m$ is a positive operator on a
$D=d^2$ dimensional space, \eqref{eqn10} and \eqref{eqn11} yield
\begin{widetext}
\begin{eqnarray}
\label{eqn12}
  1&\geq&{\Bigg[\det\Big(\sum_m p_m\big[\psi^{-1\dagger}
\bar B_m^{-1\dagger}(\phi^\dagger\phi){\bar B_m}^{-1}\psi^{-1}\big]
\otimes \big[B_m^\dagger B_m\big]\Big)\Bigg]}^{1/d^2}\nonumber \\
  &\geq&\sum_m {\Bigg[\det\Big(p_m\big[{\psi^{-1}}^\dagger
\bar B_m^{-1\dagger}(\phi^\dagger\phi){\bar B_m}^{-1}\psi^{-1}\big]\otimes
\big[B_m^\dagger B_m\big]\Big)\Bigg]}^{1/d^2}\nonumber\\
 &=& \sum_m p_m
\frac{\det(\phi^\dagger\phi)^{1/d}}{\det(\psi^\dagger\psi)^{1/d}} =
\frac{\det(\phi^\dagger\phi)^{1/d}}{\det(\psi^\dagger\psi)^{1/d}},
\end{eqnarray}
\end{widetext}
which is equivalent to
\begin{equation}
\label{eqn13}
\det(\psi^\dagger\psi)\geq\det(\phi^\dagger\phi).
\end{equation}
The relation $\det(A\otimes B)$=$(\det A)^b (\det B)^a$ , where
$a,b$ are the dimensions of $A$ and $B$, was used in deriving
\eqref{eqn12}.
Since \eqref{eqn13} is the square of \eqref{eqn5}, this proves part (ii).

If \eqref{eqn5} is an equality with both sides positive,
$\det(\phi^\dagger\phi)/\det(\psi^\dagger\psi)=1$ and the inequality
\eqref{eqn12} becomes an equality, which implies that all positive
operators in \eqref{eqn11} are proportional, i.e.
\begin{equation}
\label{eqn14}
A_m^\dagger A_m^{}\otimes B_m^\dagger B_m^{} =
f_{mn}^{}A_n^\dagger A_n^{}\otimes B_n^\dagger B_n^{},
\end{equation}
where the $f_{mn}$ are positive constants. Setting $n=1$ in
\eqref{eqn14} and inserting it in \eqref{eqn2} one gets
\begin{equation}
\label{eqn15}
(\sum_m f_{m1}) A_1^\dagger A_1\otimes B_1^\dagger B_1 = I\otimes I.
\end{equation}
This implies that both $A_1^\dagger A_1^{}$ and $B_1^\dagger B_1^{}$ are
proportional to the identity, so $A_1$ and $B_1$ are proportional to unitary
operators, and of course the same argument works for every $m$.  Since local
unitaries cannot change the Schmidt coefficients, it is obvious that
$\ket{\psi}$ and $\ket{\phi}$ must share the same set of Schmidt coefficients,
that is $\lambda_j=\mu_j$, for every $j$, and this proves (iii).

To prove (iv), note that if there is a separable operation carrying
$\ket{\psi}$ to $\ket{\phi}$ and another carrying $\ket{\phi}$ to
$\ket{\psi}$, the Schmidt ranks of $\ket{\psi}$ and $\ket{\phi}$ must be equal
by (i), and \eqref{eqn5} is an equality, so (iii) implies equal Schmidt
coefficients.

Next let us consider the modifications needed when the Schmidt ranks of
$\ket{\psi}$ and $\ket{\phi}$ might be unequal, and are possibly less than the
dimensions of $\HC_A$ or $\HC_B$, which need not be the same.  As noted
previously, \eqref{eqn9} shows that the Schmidt rank of $\ket{\phi}$ cannot be
greater than that of $\ket{\psi}$.  If it is less, then the right side of
\eqref{eqn5} is zero, because at least one of the $\mu_j$ in the product will
be zero, so part (ii) of the theorem is automatically satisfied, part (iii)
does not apply, and (iv) is trivial. Thus we only need to discuss the case in
which the Schmidt ranks of $\ket{\psi}$ and $\ket{\phi}$ have the same value
$r$.  Let $P_A$ and $P_B$ be the projectors on the $\HC_A$ and $\HC_B$
supports $\SC_A$ and $\SC_B$ of $\ket{\psi}$ (as defined at the beginning of
this section), and let $\TC_A$ and $\TC_B$ be the corresponding supports of
$\ket{\phi}$. Note that each of these subspaces is of dimension $r$.  Since
$(P_A\otimes P_B)\ket{\psi}=\ket{\psi}$, \eqref{eqn8} can be rewritten as
\begin{equation}
\label{eqn16}
\bigl(A'_m\otimes B'_m\bigr)\ket{\psi}=\sqrt{p_m}\ket{\phi},
\end{equation}
where
\begin{equation}
\label{eqn17}
 A'_m =  A_m P_A,\quad B'_m =  B_m P_B
\end{equation}
are the operators $A_m$ and $B_m$ restricted to the supports of $\ket{\psi}$.
In fact, $A'_m$ maps $\SC_A$ onto $\TC_A$, and $B'_m$ maps $\SC_B$ onto
$\TC_B$, as this is the only way in which \eqref{eqn16} can be satisfied
when $\ket{\phi}$ and $\ket{\psi}$ have the same Schmidt rank. Finally,
by multiplying \eqref{eqn2} by $P_A\otimes P_B$ on both left and right
one arrives at the closure condition
\begin{equation}
\label{eqn18}
\sum_m {A'_m}^\dagger {A'_m}^{}\otimes {B'_m}^\dagger {B'_m}^{}=P_A\otimes P_B.
\end{equation}
Thus if we use the restricted operators $A'_m$ and $B'_m$ we are back to the
situation considered previously, with $\SC_A$ and $\TC_A$ (which are
isomorphic) playing the role of $\HC_A$, and $\SC_B$ and $\TC_B$ the role of
$\HC_B$, and hence the previous proof applies.
\end{proof}

Some connections between LOCC and the more general category of separable
operations are indicated in the following corollaries:

\begin{corollary}
\label{crl1}
When $\ket{\psi}$ is majorized by $\ket{\phi}$, so there is a deterministic
LOCC mapping $\ket{\psi}$ to $\ket{\phi}$, there does not exist a separable
operation that deterministically maps $\ket{\phi}$ to $\ket{\psi}$, unless
these have equal Schmidt coefficients (are equivalent under local unitaries).
\end{corollary}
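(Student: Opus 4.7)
The plan is to reduce this corollary directly to part (iv) of Theorem 1. The starting observation, noted explicitly in the introduction, is that every LOCC protocol can be written in the separable form \eqref{eqn1}; so the hypothesis that some deterministic LOCC sends $\ket{\psi}$ to $\ket{\phi}$ already supplies a deterministic separable operation $\ket{\psi}\rightarrow\ket{\phi}$ of the kind to which Theorem 1 applies. The assumed majorization condition is not directly used in the proof at all — its only role is to guarantee, via Nielsen's theorem, that the LOCC map in question actually exists.

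From here I would argue by contrapositive. Suppose that some deterministic separable operation also sends $\ket{\phi}$ back to $\ket{\psi}$. Then we have deterministic separable operations in both directions between $\ket{\psi}$ and $\ket{\phi}$, which is precisely the hypothesis of part (iv) of Theorem 1 applied to the forward map $\ket{\psi}\rightarrow\ket{\phi}$. That part of the theorem then forces $\lambda_j=\mu_j$ for all $j$, which is the ``unless'' clause of the corollary.

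There is no real obstacle here, since all the substance has been absorbed into Theorem 1. If one wanted a self-contained derivation rather than a one-line invocation of (iv), the alternative would be to apply (i) and (ii) in both directions: (i) forces the Schmidt ranks of $\ket{\psi}$ and $\ket{\phi}$ to coincide with common value $r$, and the two versions of \eqref{eqn5} then combine to give $\prod_{j=0}^{r-1}\lambda_j=\prod_{j=0}^{r-1}\mu_j$ with both sides positive, after which part (iii) delivers $\lambda_j=\mu_j$. This merely retraces the argument used inside the theorem, so I would present the corollary as a two-sentence consequence of Theorem 1 (iv) together with the containment of LOCC in the separable class.
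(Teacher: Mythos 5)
Your proposal is correct and matches the paper's own argument, which simply observes that the deterministic LOCC map is in particular a separable operation, so Theorem~\ref{thm1}(iv) applies directly and forces equal Schmidt coefficients whenever a reverse separable map exists. The optional self-contained retracing via parts (i)--(iii) is also fine but, as you note, just unwinds what (iv) already packages.
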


This is nothing but (iv) of Theorem 1 applied when the $\ket{\psi}$ to
$\ket{\phi}$ map is LOCC, and thus separable.  It is nonetheless worth
pointing out because majorization provides a very precise characterization of
what deterministic LOCC operations can accomplish, and the corollary
provides a connection with more general separable operations.

\begin{corollary}
\label{crl2}
If either $\HC_A$ or $\HC_B$ is 2-dimensional, then $\ket{\psi}$ can be
deterministically transformed to $\ket{\phi}$ if and only if this is possible
using LOCC, i.e., $\ket{\psi}$ is majorized by $\ket{\phi}$.
\end{corollary}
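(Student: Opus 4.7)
The plan is transparent. The \emph{if} direction is free, since every LOCC is a separable operation, so the content of the corollary is the \emph{only if} direction: I will show that when $\min(\dim\HC_A,\dim\HC_B)=2$, the product inequality~\eqref{eqn5} of Theorem~\ref{thm1}(ii) is already strong enough to imply Nielsen's majorization relation, after which the existence of a deterministic LOCC follows from Nielsen's theorem.

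Without loss of generality I will assume $\dim\HC_B = 2$, so the Schmidt rank $r$ of $\ket{\psi}$ is at most $2$, and by Theorem~\ref{thm1}(i) so is that of $\ket{\phi}$. First I would dispatch the degenerate subcases in which either state has Schmidt rank $1$: there Nielsen's condition collapses to the trivial inequalities $\lambda_0^2 \le 1$ together with the identity $\sum\lambda_j^2 = \sum\mu_j^2 = 1$, so it is automatically satisfied and an LOCC exists by Nielsen's theorem without further work.

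The substantive case is $r = 2$ for both states. Normalization gives $\lambda_0^2+\lambda_1^2 = \mu_0^2+\mu_1^2 = 1$, and Theorem~\ref{thm1}(ii) gives $\lambda_0\lambda_1 \ge \mu_0\mu_1$, hence $\lambda_0^2(1-\lambda_0^2) \ge \mu_0^2(1-\mu_0^2)$. The decreasing-order convention forces $\lambda_0^2,\mu_0^2 \in [1/2,1]$, and on that interval the function $x \mapsto x(1-x)$ is strictly decreasing, so I conclude $\lambda_0^2 \le \mu_0^2$. Combined with the equal normalizations, this is precisely the single nontrivial majorization inequality Nielsen's theorem requires in dimension two, so the desired LOCC exists.

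I do not expect any serious obstacle: the corollary is essentially a low-dimensional coincidence in which the product inequality of Theorem~\ref{thm1}(ii) and the partial-sum inequality of majorization encode the same information, a collapse that fails once there are three or more Schmidt coefficients with a fixed sum. The only step needing a moment of care is the monotonicity observation on $[1/2,1]$ together with a clean case split on the Schmidt ranks, both routine.
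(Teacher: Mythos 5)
Your proof is correct and follows essentially the same route as the paper, which likewise observes that for two nonzero Schmidt coefficients with fixed normalization the product condition \eqref{eqn5} is equivalent to the single majorization inequality $\mu_0\geq\lambda_0$, and then invokes Nielsen's theorem. You merely spell out the monotonicity of $x\mapsto x(1-x)$ on $[1/2,1]$ and the rank-one edge cases, which the paper leaves implicit.
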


The proof comes from noting that when there are only two nonzero Schmidt
coefficients, the majorization condition is $\mu_0\geq \lambda_0$, and this is
equivalent to \eqref{eqn5}.

\section{Separable random unitary channel}
\label{sct3}

\subsection{Condition for deterministic mapping}
\label{sct3a}

Any quantum operation (trace-preserving completely positive map) can be
thought of as a quantum channel, and if the Kraus operators are proportional
to unitaries, the channel is bistochastic (maps $I$ to $I$) and is called a
random unitary channel or a random external field in Sec.~10.6
of \cite{GeometryQuantumStates}.  Thus a separable operation in which the $A_m$
and $B_m$ are proportional to unitaries $U_m$ and $V_m$, so \eqref{eqn1} takes
the form
\begin{equation}
\label{eqn19}
\rho' = \Lambda(\rho)=\sum_m p_m\big(U_m\otimes
V_m\big)\rho\big(U_m\otimes V_m\big)^\dagger,
\end{equation}
with the $p_m>0$ summing to 1, can be called a separable random unitary
channel.  We shall be interested in the case in which $\HC_A$ and $\HC_B$ have
the same dimension $d$, and in which the separable unitary channel
deterministically maps not just one but a collection $\{\ket{\psi_j}\}$,
$1\leq j\leq N$ of pure states of full Schmidt rank $d$ to pure states.  This
means that \eqref{eqn8} written in the form
\begin{equation}
\label{eqn20}
\bigl(U_m \otimes V_m\bigr)
\ket{\psi_j} \doteq \ket{\phi_j},
\end{equation}
must hold for all $j$ as well as for all $m$.  The dot equality $\doteq$ means
the two sides can differ by at most a complex phase.  Here such phases cannot
simply be incorporated in $U_m$ or $V_m$, because \eqref{eqn20} must hold for
all values of $j$, even though they are not relevant for the map carrying
$\dyad{\psi_j}{\psi_j}$ to $\dyad{\phi_j}{\phi_j}$.

\begin{theorem}
\label{thm2}
Let $\{\ket{\psi_j}\}$, $1\leq j\leq N$ be a collection of states of full
Schmidt rank on a tensor product $\HC_A\otimes\HC_B$ of two spaces of equal
dimension, and let $\Lambda$ be the separable random unitary channel defined by
\eqref{eqn19}. Let $\psi_j$ and $\phi_j$  be the operators dual to
$\ket{\psi_j}$ and $\ket{\phi_j}$---see \eqref{eqn6} and \eqref{eqn7}.

i) If every $\ket{\psi_j}$ from the collection is deterministically mapped to
a pure state, then
\begin{equation}
\label{eqn21}
 U_m^\dagger U_n^{} \psi_j^{}\psi_k^\dagger \doteq
 \psi_j^{}\psi_k^\dagger U_m^\dagger U_n^{}
\end{equation}
for every $m, n, j,$ and $k$.

ii) If \eqref{eqn21} holds for a \emph{fixed} $m$ and every $n, j,$ and $k$,
it holds for every $m, n, j,$ and $k$.  If in addition \emph{at least one} of
the states from the collection $\{\ket{\psi_j}\}$ is deterministically mapped
to a pure state by $\Lambda$, then every state in the collection is mapped to
a pure state.

iii) Statements (i) and (ii) also hold when \eqref{eqn21} is replaced
with
\begin{equation}
\label{eqn22}
 V_m^\dagger V_n^{} \psi_j^\dagger\psi_k^{} \doteq
 \psi_j^\dagger\psi_k^{} V_m^\dagger V_n^{}.
\end{equation}

\end{theorem}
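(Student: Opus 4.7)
The plan is to handle parts (i), (ii), and (iii) in sequence, with (i) providing a single workhorse identity that drives (ii), and (iii) following by symmetry.

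For part (i), I would begin by translating the hypothesis that $\Lambda(\dyad{\psi_j}{\psi_j})$ is pure into the pointwise assertion that all Kraus outputs are proportional, i.e., $(U_m\otimes V_m)\ket{\psi_j}\doteq\ket{\phi_j}$ for every $m$ and $j$, exactly as in the derivation of \eqref{eqn8}. Applying the map-state duality of equations \eqref{eqn6} and \eqref{eqn7} converts this to $U_m\psi_j\bar V_m\doteq\phi_j$, and comparing the identities for two different $m$'s eliminates $\phi_j$, yielding after rearrangement $\psi_j^{-1}U_m^\dagger U_n\psi_j\doteq\bar V_m\bar V_n^\dagger$, whose right-hand side is independent of $j$. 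Taking the Hermitian adjoint and swapping $m\leftrightarrow n$ gives the companion relation $\psi_j^\dagger U_m^\dagger U_n\psi_j^{-\dagger}\doteq\bar V_m\bar V_n^\dagger$. Equating the first form for $j$ with the second form for $k$ and clearing by multiplying on the left by $\psi_j$ and on the right by $\psi_k^\dagger$ then produces \eqref{eqn21}. The reason for invoking both the identity and its adjoint is precisely to obtain $\psi_j\psi_k^\dagger$ rather than $\psi_j\psi_k^{-1}$ in the conclusion.

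For part (ii), the first half is a short algebraic check. Setting $W_l=U_{m_0}^\dagger U_l$ so that $U_m^\dagger U_n=W_m^\dagger W_n$, the hypothesis states that each $W_n$ phase-commutes with every $\psi_j\psi_k^\dagger$; taking the Hermitian adjoint of this relation and swapping $j\leftrightarrow k$ shows that each $W_m^\dagger$ does too, and composing the two phase-commutations handles $W_m^\dagger W_n$. For the second half, I would feed the hypothesis $(U_m\otimes V_m)\ket{\psi_{j_0}}\doteq\ket{\phi_{j_0}}$ into the derivation of (i) to obtain $\psi_{j_0}^\dagger U_m^\dagger U_n\psi_{j_0}^{-\dagger}\doteq\bar V_m\bar V_n^\dagger$, then apply \eqref{eqn21} with $k=j_0$ and multiply on the left by $\psi_j^{-1}$ and on the right by $\psi_{j_0}^{-\dagger}$ to conclude $\psi_j^{-1}U_m^\dagger U_n\psi_j\doteq\bar V_m\bar V_n^\dagger$ for every $j$. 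Reversing the derivation of (i) then yields $(U_m\otimes V_m)\ket{\psi_j}\doteq(U_n\otimes V_n)\ket{\psi_j}$ for all $m,n$, so every Kraus output is proportional and $\ket{\psi_j}$ is deterministically mapped to a pure state.

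For part (iii), I would appeal to the symmetry of the setup under exchange of $\HC_A$ and $\HC_B$. Repeating the entire map-state duality construction against an orthonormal basis of $\HC_A$ instead of $\HC_B$ produces $V_m\tilde\psi_j\bar U_m\doteq\tilde\phi_j$, where $\tilde\psi_j\colon\HC_A\to\HC_B$ is the $\HC_A$-dual of $\ket{\psi_j}$; the same chain of manipulations that gave \eqref{eqn21} then delivers the analog \eqref{eqn22}, and the sub-arguments of (ii) transcribe verbatim with the roles of $U$ and $V$ interchanged. I expect the main obstacle to be the careful bookkeeping of transposes and complex conjugations introduced by switching the duality basis, needed to verify that the $\HC_B$-operator entering the commutator is indeed $\psi_j^\dagger\psi_k$ rather than a basis-dependent variant of it; a useful cross-check is to derive \eqref{eqn22} more directly by taking the partial trace over $\HC_A$ of the identity $(U_m\otimes V_m)\dyad{\psi_j}{\psi_k}(U_n\otimes V_n)^\dagger\doteq\dyad{\phi_j}{\phi_k}$, which immediately places $V_m^\dagger V_n$ in a phase-commutation with an operator on $\HC_B$ built from the $\ket{\psi_j}$'s.
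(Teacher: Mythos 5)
Your proposal is correct and follows essentially the same route as the paper: map-state duality converts $(U_m\otimes V_m)\ket{\psi_j}\doteq\ket{\phi_j}$ into $U_m\psi_j\bar V_m\doteq\phi_j$, the $\phi_j$ (or $\bar V_m$) are eliminated by comparing different $m$'s and using unitarity plus the invertibility guaranteed by full Schmidt rank, and (iii) follows by the $A\leftrightarrow B$ symmetry. Your pivot identity $\psi_j^{-1}U_m^\dagger U_n\psi_j\doteq\bar V_m\bar V_n^{\dagger}$ is just a repackaging of the paper's intermediate relations \eqref{eqn24}--\eqref{eqn28}, and your handling of part (ii) (invertibility of $\psi_{j_0}$ in place of the paper's invertibility of $\phi_1^\dagger$) is an equivalent variant.
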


\begin{proof}

Part (i).
By map-state duality \eqref{eqn20} can be rewritten as
\begin{equation}
\label{eqn23}
 U_m\psi_j\bar V_m \doteq \phi_j,
\end{equation}
where $\bar V_m$ is the transpose of $V_m$---see the remarks following
\eqref{eqn9}. By combining \eqref{eqn23} with its adjoint with $j$ replaced by
$k$, and using the fact that $\bar V_m$ is unitary, we arrive at
\begin{equation}
\label{eqn24}
 U_m^{}\psi_j^{}\psi_k^\dagger U_m^\dagger \doteq \phi_j^{}\phi_k^\dagger.
\end{equation}
Since the right side is independent of $m$, so is the left, which means that
\begin{equation}
\label{eqn25}
 U_n^{}\psi_j^{}\psi_k^\dagger U_n^\dagger  \doteq
 U_m^{}\psi_j^{}\psi_k^\dagger U_m^\dagger.
\end{equation}
Multiply on the left by $U_m^\dagger$ and on the right by $U_n$ to obtain
\eqref{eqn21}.

Part (ii).
If \eqref{eqn25}, which is equivalent to \eqref{eqn21}, holds for $m=1$ it
obviously holds for all values of $m$. Now assume that $\ket{\psi_1}$ is
mapped by $\Lambda$ to a pure state $\ket{\phi_1}$, so \eqref{eqn23} holds
for all $m$ when $j=1$. Take the adjoint of this equation and multiply by
$\bar V_m$ to obtain
\begin{equation}
\label{eqn26}
 \psi_1^\dagger U_m^\dagger \doteq \bar V_m^{}\phi_1^\dagger.
\end{equation}
Set $k=1$ in \eqref{eqn25}, and use  \eqref{eqn26} to rewrite it as
\begin{equation}
\label{eqn27}
 U_n^{}\psi_j^{}\bar V_n^{}\phi_1^\dagger \doteq
 U_m^{}\psi_j^{}\bar V_m^{}\phi_1^\dagger.
\end{equation}
Since by hypothesis $\ket{\psi_1}$ has Schmidt rank $d$, the same is true of
$\psi_1$, and since $U_m$ and $\bar V_m$ in \eqref{eqn23} are unitaries,
$\phi_1$ and thus also $\phi_1^\dagger$ has rank $d$ and is invertible.
Consequently, \eqref{eqn27} implies that
\begin{equation}
\label{eqn28}
 U_n^{}\psi_j^{}\bar V_n^{} \doteq U_m^{}\psi_j^{}\bar V_m^{},
\end{equation}
and we can define $\phi_j$ to be one of these common values, for example
$U_1\psi_j\bar V_1$.  Map-state duality transforms this $\phi_j$
into $\ket{\phi_j}$ which, because of \eqref{eqn28}, satisfies \eqref{eqn20}.

Part (iii). The roles of $U_m$ and $V_m$ are obviously symmetrical, but our
convention for map-state duality makes $\psi_j$ a map from $\HC_B$ to $\HC_A$,
which is the reason why its adjoint appears in \eqref{eqn22}.
\end{proof}

\subsection{Example}
\label{sct3b}

Let us apply Theorem \ref{thm2} to see what pure states of full Schmidt rank
are deterministically mapped onto pure states by the following separable
random unitary channel on two qubits:
\begin{equation}
\label{eqn29}
\Lambda(\rho)=p\rho+(1-p)(X\otimes Z)\rho (X\otimes Z).
\end{equation}
The Kraus operators are $I\otimes I$ and $X\otimes Z$, so $U_1=I$ and $U_2=X$.
Thus the condition \eqref{eqn21} for a collection of states $\{\ket{\psi_j}\}$
to be deterministically mapped to pure states is
\begin{equation}
\label{eqn30}
X\psi_j\psi_k^\dagger\doteq \psi_j\psi_k^\dagger X.
\end{equation}
It is easily checked that
\begin{equation}
\label{eqn31}
 \ket{\psi_1}= (\ket{+}\ket{0}+\ket{-}\ket{1})/\sqrt{2}
\end{equation}
 is mapped to itself by \eqref{eqn29}.  If the corresponding
\begin{equation}
\label{eqn32} \psi_1= \frac{1}{2}\left(
\begin{array}{cc}
1 & 1\\
1 & -1\\
\end{array}
\right)
\end{equation}
is inserted in \eqref{eqn30} with $k=1$, one can show that \eqref{eqn30} is
satisfied for any $2\times2$ matrix
\begin{equation}
\label{eqn33}
\psi_j=\left(
\begin{array}{cc}
a_j & b_j\\
c_j & d_j\\
\end{array}
\right)
\end{equation}
having $c_j=\pm a_j$ and $d_j=\mp b_j$, and that in turn these
satisfy \eqref{eqn30} for every $j$ and $k$. Thus all states of the form
\begin{equation}
\label{eqn34}
\ket{\psi_\pm}=a\ket{00}+b\ket{01}\pm a\ket{10}\mp b\ket{11}
\end{equation}
with $a$ and $b$ complex numbers, are mapped by this channel into pure states.

\section{Conclusions}
\label{sct4}

Our main results are in Theorem~\ref{thm1}: if a pure state on a bipartite
system $\HC_A\otimes\HC_B$ is deterministically mapped to a pure state by a
separable operation $\{A_m\otimes B_m\}$, then the product of the Schmidt
coefficients can only decrease, and if it remains the same, the two sets of
Schmidt coefficients are identical to each other, and the $A_m$ and $B_m$
operators are proportional to unitaries.  (See the detailed statement of the
theorem for situations in which some of the Schmidt coefficients vanish.)
This \emph{product condition} is necessary but not sufficient: i.e., even if
it is satisfied there is no guarantee that a separable operation exists which
can carry out the specified map.  Indeed, we think it is likely that when both
$\HC_A$ and $\HC_B$ have dimension 3 or more there are situations in which the
product condition is satisfied but a deterministic map is not possible.  The
reason is that \eqref{eqn5} is consistent with $\ket{\phi}$ having a larger
entanglement than $\ket{\psi}$, and we doubt whether a separable operation can
increase entanglement.  While it is known that LOCC cannot increase the
average entanglement [\cite{HorodeckiQuantumEntanglement}, Sec.~XV D], there
seems to be no similar result for general separable operations. This is an
important open question.

It is helpful to compare the product condition \eqref{eqn5} with Nielsen's
majorization condition, which says that a deterministic separable operation of
the LOCC type can map $\ket{\psi}$ to $\ket{\phi}$ if and only if $\ket{\phi}$
majorizes $\ket{\psi}$ \cite{ntk1}. Corollary~\ref{crl2} of Theorem~\ref{thm1}
shows that the two are identical if system $A$ or system $B$ is 2-dimensional.
Under this condition a general separable operation can deterministically map
$\ket{\psi}$ to $\ket{\phi}$ only if it is possible with LOCC.  This
observation gives rise to the conjecture that when either $A$ or $B$ is
2-dimensional \emph{any} separable operation is actually of the LOCC form.
This conjecture is consistent with the fact that the well-known example
\cite{PhysRevA.59.1070} of a separable operation that is \emph{not} LOCC uses
the tensor product of two 3-dimensional spaces.  But whether separable and LOCC
coincide even in the simple case of a $2\times 2$ system is at present an
open question (see note added in proof).

When the dimensions of $A$ and $B$ are both 3 or more the product condition of
Theorem~\ref{thm1} is weaker than the majorization condition: if $\ket{\phi}$
majorizes $\ket{\psi}$ then \eqref{eqn5} will hold \cite{ntk2}, but the
converse is in general not true. Thus there might be situations in which a
separable operation deterministically maps $\ket{\psi}$ to $\ket{\phi}$ even
though $\ket{\phi}$ does not majorize $\ket{\psi}$.  If such cases exist,
Corollary~\ref{crl1} of Theorem~\ref{thm1} tells us that $\ket{\psi}$ and
$\ket{\phi}$ must be incomparable under majorization: neither one majorizes the
other.  Finding an instance, or demonstrating its impossibility, would help
clarify how general separable operations differ from the LOCC subclass.

When a separable operation deterministically maps $\ket{\psi}$ to $\ket{\phi}$
and the product of the two sets of Schmidt coefficients are the same, part
(iii) of Theorem~\ref{thm1} tells us that the collections of Schmidt
coefficients are in fact identical, and that the $A_m$ and $B_m$ operators
(restricted if necessary to the supports of $\ket{\psi}$) are proportional to
unitaries.  Given this proportionality (and that the map is deterministic),
the identity of the collection of Schmidt coefficients is immediately evident,
but the converse is not at all obvious.  The result just mentioned can be used
to simplify part of the proof in some interesting work on local copying,
specifically the unitarity of local Kraus operators in [\cite{AnselmiChefles:LocalCopying}, Sec.~3.1]. It might have applications in other cases
where one is interested in deterministic nonlocal operations.

Finally, Theorem~\ref{thm2} gives conditions under which a separable random
unitary operation can deterministically map a whole collection of pure states
to pure states.  These conditions [see \eqref{eqn21} or \eqref{eqn22}] involve
both the unitary operators and the states themselves, expressed as operators
using map-state duality, in an interesting combination.  While these results
apply only to a very special category, they raise the question whether
simultaneous deterministic maps of several pure states might be of interest
for more general separable operations. The nonlocal copying problem, as
discussed in
\cite{AnselmiChefles:LocalCopying,KayEricsson:LocalCloning,GhoshKar:LocalCloning,OwariHayashi:LocalCopying},
is one situation where results of this type are relevant, and there may be
others.

\textit{Note added in proof.} Our conjecture on the equivalence of separable operations and LOCC for low dimensions has been shown to be false \cite{quantph.0705.0795}.

\begin{acknowledgments}
  We thank Shengjun Wu for useful conversations. The research described here
  received support from the National Science Foundation through Grant No.
  PHY-0456951.
\end{acknowledgments}


\begin{thebibliography}{17}
\expandafter\ifx\csname natexlab\endcsname\relax\def\natexlab#1{#1}\fi
\expandafter\ifx\csname bibnamefont\endcsname\relax
  \def\bibnamefont#1{#1}\fi
\expandafter\ifx\csname bibfnamefont\endcsname\relax
  \def\bibfnamefont#1{#1}\fi
\expandafter\ifx\csname citenamefont\endcsname\relax
  \def\citenamefont#1{#1}\fi
\expandafter\ifx\csname url\endcsname\relax
  \def\url#1{\texttt{#1}}\fi
\expandafter\ifx\csname urlprefix\endcsname\relax\def\urlprefix{URL }\fi
\providecommand{\bibinfo}[2]{#2}
\providecommand{\eprint}[2][]{\url{#2}}

\bibitem[{\citenamefont{Horodecki et~al.}()\citenamefont{Horodecki, Horodecki,
  Horodecki, and Horodecki}}]{HorodeckiQuantumEntanglement}
\bibinfo{author}{\bibfnamefont{R.}~\bibnamefont{Horodecki}},
  \bibinfo{author}{\bibfnamefont{P.}~\bibnamefont{Horodecki}},
  \bibinfo{author}{\bibfnamefont{M.}~\bibnamefont{Horodecki}},
  \bibnamefont{and}
  \bibinfo{author}{\bibfnamefont{K.}~\bibnamefont{Horodecki}}, \eprint{e-print arXiv:quant-ph/0702225}.

\bibitem[{\citenamefont{Bennett et~al.}(1999)\citenamefont{Bennett, DiVincenzo,
  Fuchs, Mor, Rains, Shor, Smolin, and Wootters}}]{PhysRevA.59.1070}
\bibinfo{author}{\bibfnamefont{C.~H.} \bibnamefont{Bennett}},
  \bibinfo{author}{\bibfnamefont{D.~P.} \bibnamefont{DiVincenzo}},
  \bibinfo{author}{\bibfnamefont{C.~A.} \bibnamefont{Fuchs}},
  \bibinfo{author}{\bibfnamefont{T.}~\bibnamefont{Mor}},
  \bibinfo{author}{\bibfnamefont{E.}~\bibnamefont{Rains}},
  \bibinfo{author}{\bibfnamefont{P.~W.} \bibnamefont{Shor}},
  \bibinfo{author}{\bibfnamefont{J.~A.} \bibnamefont{Smolin}},
  \bibnamefont{and} \bibinfo{author}{\bibfnamefont{W.~K.}
  \bibnamefont{Wootters}}, \bibinfo{journal}{Phys. Rev. A}
  \textbf{\bibinfo{volume}{59}}, \bibinfo{pages}{1070} (\bibinfo{year}{1999}).

\bibitem[{\citenamefont{Lo and Popescu}(2001)}]{PhysRevA.63.022301}
\bibinfo{author}{\bibfnamefont{H.-K.} \bibnamefont{Lo}} \bibnamefont{and}
  \bibinfo{author}{\bibfnamefont{S.}~\bibnamefont{Popescu}},
  \bibinfo{journal}{Phys. Rev. A} \textbf{\bibinfo{volume}{63}},
  \bibinfo{pages}{022301} (\bibinfo{year}{2001}).

\bibitem[{\citenamefont{Nielsen}(1999)}]{Nielsen:Majorization}
\bibinfo{author}{\bibfnamefont{M.~A.} \bibnamefont{Nielsen}},
  \bibinfo{journal}{Phys. Rev. Lett.} \textbf{\bibinfo{volume}{83}},
  \bibinfo{pages}{436} (\bibinfo{year}{1999}).

\bibitem[{ntk({\natexlab{a}})}]{ntk1}
\bibinfo{note}{By ``$\ket{\phi}$ majorizes $\ket{\psi}$'' we mean that the
  vector of eigenvalues of the reduced density operator $\rho(\phi)$ of
  $\ket{\phi}$ on $\HC_A$ majorizes that of the reduced density operator
  $\rho(\psi)$ of $\ket{\psi}$ in the sense discussed in
  \cite{Nielsen:Majorization}, or in [\cite{NielsenChuang:QuantumComputation}, Sec.~12.5.1]: the sum of the $k$ largest
  eigenvalues of $\rho(\phi)$ is never smaller than the corresponding sum for
  $\rho(\psi)$. A helpful discussion of majorization is also found in
  \cite{HornJohnson:MatrixAnalysis} (see the index), with, however, the
  opposite convention from Nielsen for ``$A$ majorizes $B$''}.

\bibitem[{\citenamefont{Horn and Johnson}(1999)}]{HornJohnson:MatrixAnalysis}
\bibinfo{author}{\bibfnamefont{R.~A.} \bibnamefont{Horn}} \bibnamefont{and}
  \bibinfo{author}{\bibfnamefont{C.~R.} \bibnamefont{Johnson}},
  \emph{\bibinfo{title}{Matrix Analysis}} (\bibinfo{publisher}{Cambridge
  University Press}, \bibinfo{address}{Cambridge}, \bibinfo{year}{1999}).

\bibitem[{\citenamefont{\.Zyczkowski and Bengtsson}(2004)}]{ZcBn04}
\bibinfo{author}{\bibfnamefont{K.}~\bibnamefont{\.Zyczkowski}}
  \bibnamefont{and}
  \bibinfo{author}{\bibfnamefont{I.}~\bibnamefont{Bengtsson}},
  \bibinfo{journal}{Open Syst. Inf. Dyn.} \textbf{\bibinfo{volume}{11}},
  \bibinfo{pages}{3} (\bibinfo{year}{2004}), \eprint{e-print quant-ph/0401119}.

\bibitem[{\citenamefont{Griffiths et~al.}(2006)\citenamefont{Griffiths, Wu, Yu,
  and Cohen}}]{Griffiths:AtemporalDiagrams}
\bibinfo{author}{\bibfnamefont{R.~B.} \bibnamefont{Griffiths}},
  \bibinfo{author}{\bibfnamefont{S.}~\bibnamefont{Wu}},
  \bibinfo{author}{\bibfnamefont{L.}~\bibnamefont{Yu}}, \bibnamefont{and}
  \bibinfo{author}{\bibfnamefont{S.~M.} \bibnamefont{Cohen}},
  \bibinfo{journal}{Phys. Rev. A} \textbf{\bibinfo{volume}{73}},
  \bibinfo{pages}{052309} (\bibinfo{year}{2006}).

\bibitem[{\citenamefont{Bengtsson and
  \.Zyczkowski}(2006)}]{GeometryQuantumStates}
\bibinfo{author}{\bibfnamefont{I.}~\bibnamefont{Bengtsson}} \bibnamefont{and}
  \bibinfo{author}{\bibfnamefont{K.}~\bibnamefont{\.Zyczkowski}},
  \emph{\bibinfo{title}{Geometry of Quantum States}}
  (\bibinfo{publisher}{Cambridge University Press},
  \bibinfo{address}{Cambridge}, \bibinfo{year}{2006}).

\bibitem[{ntk({\natexlab{b}})}]{ntk2}
\bibinfo{note}{The general argument that \eqref{eqn5} is implied by (though it
  does not imply) majorization will be found in [\cite{MajorizationNotes}, Sec.~4], or as an exercise on [\cite{HornJohnson:MatrixAnalysis}, p.~199]}.

\bibitem[{\citenamefont{Anselmi et~al.}(2004)\citenamefont{Anselmi, Chefles,
  and Plenio}}]{AnselmiChefles:LocalCopying}
\bibinfo{author}{\bibfnamefont{F.}~\bibnamefont{Anselmi}},
  \bibinfo{author}{\bibfnamefont{A.}~\bibnamefont{Chefles}}, \bibnamefont{and}
  \bibinfo{author}{\bibfnamefont{M.~B.} \bibnamefont{Plenio}},
  \bibinfo{journal}{New J. Phys.} \textbf{\bibinfo{volume}{6}},
  \bibinfo{pages}{164} (\bibinfo{year}{2004}).

\bibitem[{\citenamefont{Kay and Ericsson}(2006)}]{KayEricsson:LocalCloning}
\bibinfo{author}{\bibfnamefont{A.}~\bibnamefont{Kay}} \bibnamefont{and}
  \bibinfo{author}{\bibfnamefont{M.}~\bibnamefont{Ericsson}},
  \bibinfo{journal}{Phys. Rev. A} \textbf{\bibinfo{volume}{73}},
  \bibinfo{pages}{012343} (\bibinfo{year}{2006}).

\bibitem[{\citenamefont{Ghosh et~al.}(2004)\citenamefont{Ghosh, Kar, and
  Roy}}]{GhoshKar:LocalCloning}
\bibinfo{author}{\bibfnamefont{S.}~\bibnamefont{Ghosh}},
  \bibinfo{author}{\bibfnamefont{G.}~\bibnamefont{Kar}}, \bibnamefont{and}
  \bibinfo{author}{\bibfnamefont{A.}~\bibnamefont{Roy}},
  \bibinfo{journal}{Phys. Rev. A} \textbf{\bibinfo{volume}{69}},
  \bibinfo{pages}{052312} (\bibinfo{year}{2004}).

\bibitem[{\citenamefont{Owari and Hayashi}(2006)}]{OwariHayashi:LocalCopying}
\bibinfo{author}{\bibfnamefont{M.}~\bibnamefont{Owari}} \bibnamefont{and}
  \bibinfo{author}{\bibfnamefont{M.}~\bibnamefont{Hayashi}},
  \bibinfo{journal}{Phys. Rev. A} \textbf{\bibinfo{volume}{74}},
  \bibinfo{pages}{032108} (\bibinfo{year}{2006}).

\bibitem[{\citenamefont{Nielsen and
  Chuang}(2000)}]{NielsenChuang:QuantumComputation}
\bibinfo{author}{\bibfnamefont{M.~A.} \bibnamefont{Nielsen}} \bibnamefont{and}
  \bibinfo{author}{\bibfnamefont{I.~L.} \bibnamefont{Chuang}},
  \emph{\bibinfo{title}{Quantum Computation and Quantum Information}}
  (\bibinfo{publisher}{Cambridge University Press},
  \bibinfo{address}{Cambridge}, \bibinfo{year}{2000}), \bibinfo{edition}{5th}
  ed.

\bibitem[{\citenamefont{Nielsen}()}]{MajorizationNotes}
\bibinfo{author}{\bibfnamefont{M.~A.} \bibnamefont{Nielsen}},
  \bibinfo{note}{Unpublished lecture notes of a course given at Caltech,
  Pasadena}, \urlprefix\url{http://www.qinfo.org/talks/1999/06-maj/maj.pdf}.

\bibitem[{\citenamefont{Duan et~al.}()\citenamefont{Duan, Feng, and
  Ying}}]{quantph.0705.0795}
\bibinfo{author}{\bibfnamefont{R.}~\bibnamefont{Duan}},
  \bibinfo{author}{\bibfnamefont{Y.}~\bibnamefont{Feng}}, \bibnamefont{and}
  \bibinfo{author}{\bibfnamefont{M.}~\bibnamefont{Ying}},
   \bibinfo{note}{e-print arXiv:0705.0795 [quant-ph]}.

\end{thebibliography}

\end{document}